\documentclass[10pt,journal,compsoc]{IEEEtran}
\usepackage[utf8]{inputenc}
\usepackage[T1]{fontenc}
\usepackage{amssymb}
\usepackage{amsmath}
\usepackage{algorithm}
\usepackage{algpseudocode}
\usepackage{graphicx}
\usepackage{booktabs}
\usepackage[hidelinks]{hyperref}
\usepackage{multirow}
\usepackage{pifont}
\usepackage{float}
\usepackage[section]{placeins}
\usepackage{bm}
\usepackage{amsthm}
\usepackage{makecell}
\graphicspath{{figures/}}

\DeclareMathOperator{\quantile}{quantile}
\newtheorem{theorem}{Theorem}
\newtheorem{definition}{Definition}

\theoremstyle{remark}
\newtheorem{remark}{Remark}

\begin{document}

\title{Robust Conformal Intrusion Detection via\\
Traffic-Aware Calibration and Attack-Orbit Invariance}

\author{Zhenpeng~Li%
\IEEEcompsocitemizethanks{%
\IEEEcompsocthanksitem Zhenpeng Li is with Guangzhou Health Science College,
No.~248 Guangyuan Middle Road, Guangzhou, Guangdong 510405, China
(e-mail: 2025301001@gzws.edu.cn). Corresponding author: Zhenpeng Li.}}

\IEEEtitleabstractindextext{
\begin{abstract}
Large language models fine-tuned for network intrusion detection emit
single-point predictions without statistical validity guarantees. Conformal
prediction supplies a finite-sample coverage guarantee, but a threshold
calibrated on clean traffic fails once an adversary perturbs controllable
network features. We show this failure across three intrusion detection
benchmarks and propose traffic-aware conformal prediction, which calibrates
on traffic drawn from the perturbation mechanism an attacker is expected to
use and provably restores coverage whenever that mechanism is known and can
be sampled. A stronger, adaptive attacker that queries the target model's
own score can still degrade this matched-calibration guarantee. We address
this second threat model by scoring on a representation restricted to
features, and their deterministic descendants, that the attacker cannot
reach, and prove this yields an exact, pathwise coverage guarantee rather
than a probabilistic bound. Across three independently fine-tuned language
model architectures, this representation remains completely unchanged under
every evaluated attack attempt, at a quantified seven-to-fourteen-point cost
in clean accuracy relative to the unrestricted feature set.
\end{abstract}

\begin{IEEEkeywords}
conformal prediction, intrusion detection, large language models,
adversarial evaluation, uncertainty quantification, network security,
coverage guarantee
\end{IEEEkeywords}}

\maketitle

\section{Introduction}
\label{sec:intro}

Network intrusion detection systems (IDS) triage traffic before uncertain or
high-risk flows are routed to human analysts.
Large language models (LLMs) have emerged as a possible IDS backbone because
their autoregressive architecture can process heterogeneous traffic feature
descriptions in natural language and can be adapted to new label spaces through
fine-tuning or prompting~\cite{ferrag2023llm,liu2024netgpt}.
Fine-tuned LLM-IDS remain point classifiers, and they share a
critical shortcoming with standard neural classifiers: they emit single-point
predictions without any statistical validity certificate.
In a security context, an uncalibrated confidence score of 0.87 for ``DoS''
is uninterpretable---it does not tell the analyst whether the system will be
correct on 87\% of similar flows, nor does it warn when the model is about to
be wrong.

\textbf{The conformal prediction gap.}
Conformal prediction (CP)~\cite{vovk2005algorithmic,angelopoulos2021gentle}
converts any trained classifier into a set-valued predictor $\mathcal{C}(x)$
satisfying a \emph{distribution-free} marginal coverage guarantee,
$\Pr(y \in \mathcal{C}(x)) \geq 1{-}\alpha$, requiring only that calibration
and test data are \emph{exchangeable}. Network traffic is not stationary,
however: adversaries can influence flow-level features such as source-side
packet and byte counts, durations, and flag aggregates to evade detection.
We term these \emph{directly controllable} (DC) features, define their
dataset-specific masks under an explicit attacker-capability model
\cite{apruzzese2021realistic}, and note that the resulting covariate shift
breaks exchangeability, causing standard CP to lose its coverage certificate
precisely when it is needed most. This breakdown has not been characterized
with score, set-efficiency, and provenance diagnostics under a
controllable-feature threat model, nor has the scope of a calibration-time
remedy been made explicit for network security; existing cybersecurity work
uses CP to support online updates under concept drift
\cite{escudero2025conformal} but does not study prediction-set coverage
when a specified feature-manipulation mechanism changes the score
distribution.

\textbf{Our contributions.} Existing robust conformal prediction methods
certify coverage by bounding how far an adversary can shift the
nonconformity score's distribution, which requires a Lipschitz or smoothing
argument on the scoring function itself; this is difficult to obtain when
that function is a black-box, autoregressive LLM scoring a nonlinearly
projected network flow. We instead give a two-stage framework, and prove
each stage's guarantee is exact under its assumptions rather than
heuristic. This paper makes the following contributions:
\begin{enumerate}
  \item \textbf{A two-stage robust conformal framework, with a
  distribution-free guarantee at each stage.} When the attacker's
  perturbation mechanism is known and can be sampled, traffic-aware
  conformal prediction (ta-CP) calibrates on matched perturbed traffic and
  provably recovers the $1{-}\alpha$ marginal coverage guarantee
  (Theorem~\ref{thm:ta-coverage}); we extend this to a distributionally
  robust bound that degrades by at most the total variation distance
  between calibration- and test-time score distributions for \emph{any}
  test-time process, i.i.d.\ or not (Theorem~\ref{thm:robust-coverage}), and
  show this bound is realized in both directions: a label-free
  surrogate-transfer attack leaves coverage near nominal, while a stronger,
  direct target-score-query attack that queries the target LLM's own
  true-label score at every optimization round degrades coverage by up to
  19 percentage points, certifying a total variation distance of at least
  $0.075$--$0.19$.

  \item \textbf{Exact coverage via attack-orbit-invariant representations,
  and an automated construction for it.} When the attacker instead has
  adaptive, target-score-query access, we prove
  (Theorem~\ref{thm:invariant-coverage}) that scoring on a representation
  invariant to the adversary's action yields an exact, pathwise coverage
  identity against \emph{any} adversary within the threat model, with no
  gap term at all, and derive a gap identity
  (Theorem~\ref{thm:gap-identity}) that recovers
  Theorem~\ref{thm:robust-coverage} and existing Lipschitz- and
  smoothing-based robust conformal prediction methods as the complementary
  regime in which exact invariance is not achieved. We give an automated,
  probe-based procedure (Algorithm~\ref{alg:descendant}) that constructs
  such a representation without a documented feature-dependency graph, by
  identifying deterministic descendants of attacker-controlled coordinates
  empirically.

  \item \textbf{Cross-architecture, cross-attack, and cross-method
  validation.} We replicate the matched-perturbation protocol on three
  independently fine-tuned LLM architectures (LLaMA3-8B, Qwen3-8B,
  Mistral-7B) and a non-LLM XGBoost baseline under an identical
  provenance-audited partition and perturbation protocol; evaluate weighted
  split CP~\cite{tibshirani2019conformal}, an established off-the-shelf
  covariate-shift remedy, under the same scenario, finding it does not
  substitute for calibrating on matched perturbed traffic; and verify the
  attack-orbit-invariant construction on RT-IoT2022 across all three LLM
  architectures under the direct target-score-query attack: the
  constructed representation is left byte-identical by every one of $180$
  attack attempts across $9$ architecture-seed configurations, at a
  quantified utility cost of $7$--$14$ accuracy points relative to the
  unrestricted feature set, while a naively restricted representation that
  omits descendant detection leaves a small, consistently nonzero coverage
  gap of $0.5$--$1.0$ percentage points---showing why descendant detection,
  not just removal of literal attacker-controlled fields, is necessary.
\end{enumerate}

Section~\ref{sec:related} reviews related work; Section~\ref{sec:background}
gives background on split CP and LLM-IDS inference;
Section~\ref{sec:threatmodel} formalizes the threat model;
Section~\ref{sec:method} presents traffic-aware CP and attack-orbit
invariance; Section~\ref{sec:experiments} reports experiments;
Section~\ref{sec:discussion} discusses limitations; and
Section~\ref{sec:conclusion} concludes.

\section{Related Work}
\label{sec:related}

\subsection{LLM-Based Intrusion Detection, Conformal Prediction for Security, and Adversarial Robustness}

Transformer-based models have been applied to log analysis and flow
classification, and more recently to intrusion detection and network
traffic representation directly \cite{guo2021logbert,mercha2023transids,ferrag2023llm,liu2024netgpt}; serializing structured flow features
into model input is therefore part of detector design, not a neutral
preprocessing step. None of this work addresses calibration or the
validity of the resulting confidence scores under distribution shift.
Conformal prediction (CP) has been applied to anomaly detection
\cite{laxhammar2010conformal} and to reliable pseudo-labeling under concept
drift \cite{escudero2025conformal}; adaptive conformal prediction
\cite{gibbs2021adaptive} and weighted CP under covariate shift
\cite{tibshirani2019conformal} address non-exchangeability through online
adaptation or reweighting, but neither specifies an attacker-controlled
feature space or calibrates on a matched feature-manipulation mechanism.
Section~\ref{sec:exp:weightedcp} evaluates weighted CP directly under the
threat model studied here and finds it does not close the coverage gap that
matched-perturbation calibration closes.

Adversarial attacks on network classifiers were first systematically studied
in \cite{corona2013adversarial,biggio2013evasion}, with feature- and
traffic-space attacks constraining what an attacker can modify
\cite{apruzzese2019evaluating,han2020traffic,apruzzese2021realistic}.
Closest to Section~\ref{sec:method:invariance}, work on domain constraints
\cite{sheatsley2021robustness} and problem-space adversarial machine
learning \cite{pierazzi2020intriguing} studies which feature perturbations
are realizable and how restricting a classifier to non-perturbable features
affects point-prediction robustness; the restriction itself is therefore not
new. What this line of work does not provide is a statistical guarantee: it
evaluates empirical accuracy or attack success under restriction, not
whether a downstream uncertainty-quantification procedure retains a
coverage certificate, and it gives no automated procedure for identifying
coordinates that are unreachable only as a deterministic function of
attacker-controlled ones. Defense strategies more broadly include
adversarial training \cite{madry2017pgd}, certified smoothing
\cite{cohen2019randomized}, and ensemble methods, which evaluate
point-prediction behavior rather than prediction-set validity after a
specified score shift---not ``is the point prediction correct?'' but ``does
the prediction set contain the true class with valid probability?''

\subsection{Robust and Invariant Conformal Prediction}
\label{sec:related:robustcp}

A separate line of recent work certifies conformal prediction coverage under
adversarial input perturbations directly, rather than through a
downstream classifier's robustness:
\cite{gendler2022adversarially} bounds the Lipschitz constant of the
nonconformity score via randomized smoothing;
\cite{ghosh2023probabilistically} uses a quantile-of-quantiles construction;
\cite{yan2024provably} improves the efficiency of the resulting sets;
\cite{zargarbashi2025robust} reduces the certification cost to a single
binary certificate; \cite{massena2025efficient} estimates the same
Lipschitz-type quantity via Lipschitz-bounded networks; and
\cite{luo2026gametheoretic} frames the attacker-defender interaction as a
zero-sum game. Every one of these methods fixes the representation the score
is computed on and bounds how far an adversary can shift the resulting
score's distribution, yielding a probabilistic or bounded-gap certificate in
every case. Theorem~\ref{thm:gap-identity} shows this is one endpoint of a
single gap functional; Section~\ref{sec:method:invariance} instead asks
whether the representation itself can be chosen so that the adversary's
action becomes the identity map on it, which is achievable exactly, not only
approximately, when the threat model provides an explicit,
enumerable attacker-controlled coordinate set, as the DC/IC/UC taxonomy in
Section~\ref{sec:threatmodel:taxonomy} does for network flow features.

This construction is an instance of a broader principle in invariant
statistical decision theory and, more specifically, of invariant causal
prediction: conditioning a predictor only on non-descendants of an
intervened (here, attacker-manipulated) variable yields a representation
invariant to that intervention, formalized for causal identification by
invariant causal prediction~\cite{peters2016causal} and extended to
distributional robustness by anchor
regression~\cite{rothenhausler2021anchor}. That literature typically assumes
a known causal graph and targets population-level identification or
robustness under untargeted shifts; the setting here differs in that the
dependency structure among network flow features is not published in
closed form, so Algorithm~\ref{alg:descendant} identifies the
orbit-invariant coordinate set empirically, and the target guarantee is
finite-sample split-CP coverage against a specifically adversarial actor
rather than asymptotic point-prediction risk.

\textbf{Research gap.}
Prior studies do not explain conformal coverage loss under
controllable-feature perturbations as a calibration-distribution mismatch,
or separate repaired coverage from set efficiency; nor, among methods that
certify conformal coverage under adversarial perturbation, does any give an
exact rather than bounded guarantee by choosing the scored representation
itself, or an automated construction procedure without a known
feature-dependency graph. We address this gap with ta-CP, its
distributionally robust and exact attack-orbit-invariant extensions, and an
evaluation across three IDS benchmarks.

\section{Background}
\label{sec:background}

\subsection{Split Conformal Prediction}
\label{sec:background:cp}

Let $\mathcal{X}$ be the feature space and $\mathcal{Y} = \{1, \ldots, K\}$
be the label space.
Given a trained classifier $f : \mathcal{X} \to \mathbb{R}^K$ and a
miscoverage level $\alpha \in (0,1)$, split CP constructs a prediction set as
follows.

\textbf{Calibration.}
Let $\{(x_i, y_i)\}_{i=1}^{n}$ be a held-out calibration set drawn
exchangeably from the same distribution as the test data.
Define the nonconformity score of example $i$ as
$s_i = \ell(f(x_i), y_i)$, where $\ell$ is a scoring function that assigns
lower scores to more conforming (i.e., better-predicted) examples.
Compute the empirical quantile
\begin{equation}
  \hat{q} = \quantile\!\left(\{s_i\}_{i=1}^{n},\,
             \frac{\lceil (n+1)(1-\alpha) \rceil}{n}\right).
  \label{eq:quantile}
\end{equation}

\textbf{Prediction set.}
For a test point $x_{\mathrm{test}}$, include in the prediction set all labels
whose nonconformity score does not exceed the quantile threshold~\eqref{eq:quantile}:
\begin{equation}
  \mathcal{C}(x_{\mathrm{test}}) = \{y \in \mathcal{Y} :
  \ell(f(x_{\mathrm{test}}), y) \leq \hat{q}\}.
  \label{eq:predset}
\end{equation}

\textbf{Coverage guarantee.}
Under the exchangeability assumption, split CP satisfies
\begin{equation}
  \Pr(y_{\mathrm{test}} \in \mathcal{C}(x_{\mathrm{test}})) \geq 1 - \alpha.
  \label{eq:coverage}
\end{equation}
This guarantee is \emph{distribution-free}: it holds for any $f$, any data
distribution, and with finite calibration sets, without parametric assumptions
\cite{angelopoulos2021gentle,tibshirani2019conformal}.

\textbf{Efficiency.}
An efficient prediction set is as small as possible while respecting
Eq.~(\ref{eq:coverage}). We report average set size $\mathbb{E}[|\mathcal{C}(x)|]$
as the primary efficiency metric.
When $|\mathcal{C}(x)| = 0$ (an empty prediction set), no label reaches the
calibrated inclusion threshold. This can arise from a stringent threshold or
from uniformly weak class scores, so we treat it operationally as an
\emph{abstention} event requiring human escalation rather than as a usable
classification (Section~\ref{sec:method:abstention}).

\subsection{Nonconformity Score for Autoregressive LLM-IDS}
\label{sec:background:llm}

An LLM-IDS serializes a traffic flow's features into a prompt and scores each
candidate label by teacher forcing. For model parameters $\theta$, let
$L_\theta(x,y)$ be the total negative log-likelihood (NLL) of the label-bearing
tokens in candidate string $y$. The prompt and label are separated by the same
single-space boundary used during fine-tuning; the label-bearing suffix is located
by the longest common token prefix so that tokenizer boundary merges are handled
consistently. We normalize the candidate likelihoods over the fixed label set
$\mathcal{Y}$ and employ $1-p_y$ as the nonconformity score:
\begin{equation}
  \begin{aligned}
  p_\theta(y\mid x)
    &= \frac{\exp[-L_\theta(x,y)]}
            {\sum_{k\in\mathcal{Y}}\exp[-L_\theta(x,k)]}, \\
  s(x,y) &= 1-p_\theta(y\mid x).
  \end{aligned}
  \label{eq:nll-score}
\end{equation}
Here $L_\theta(x,y)$ sums token NLL only over the candidate-label suffix. This
construction gives one probability simplex and one score definition for model
prediction, conformal calibration, and prediction-set construction.

\textbf{Remark on multi-token labels.}
Total label NLL is length dependent because longer candidate strings contain
more token factors. Normalizing the resulting candidate likelihoods does not
remove this dependence. We therefore keep the label vocabulary fixed across
calibration and evaluation and report the exact candidate strings. The conformal
coverage result remains valid for this fixed score, but efficiency and classwise
behavior can depend on label tokenization.

\section{Threat Model and Problem Formulation}
\label{sec:threatmodel}

\subsection{Network Feature Taxonomy}
\label{sec:threatmodel:taxonomy}

Network flow features can be partitioned into three categories based on
which party controls them \cite{apruzzese2021realistic}: \emph{directly
controllable} (DC) flow-table coordinates designated as attacker-manipulable
by the feature taxonomy (e.g., source-side packet, byte, duration, and flag
aggregates); \emph{indirectly controllable} (IC) aggregate statistics the
adversary can influence by repeating behavior across flows; and
\emph{uncontrollable} (UC) features determined by the remote host or network
infrastructure. The full, dataset-specific DC masks for all three benchmarks
are listed in the supplemental material.

\subsection{Adversary Model}
\label{sec:threatmodel:adversary}

\begin{definition}[Oblivious DC Adversary]
  \label{def:adversary}
  An oblivious DC adversary manipulates DC features of each flow independently
  by drawing a proposal with magnitude
  $\|\delta_{\mathrm{DC}}\|_\infty \leq \varepsilon$ in the normalized feature
  space and then applying $\mathcal{P}_{\mathcal D}$.
  The adversary does \emph{not} observe the IDS's calibration set, and cannot
  adapt across flows beyond what is captured by the DC feature budget.
  IC and UC coordinates are not directly perturbed; the projection may
  deterministically recompute derived aggregates that depend on manipulated DC
  fields.
\end{definition}

This definition specifies an aggregate feature-space stress test. It does not
assert that every projected feature vector can be generated by editing an
individual packet trace. The attacker is oblivious to the calibration set and
the conformal threshold.
Adaptive adversaries who target the calibration set are discussed in
Section~\ref{sec:discussion:limits}.

\subsection{Coverage Degradation Under Adversarial Perturbation}
\label{sec:threatmodel:degradation}

Let $(x^*, y) = (\mathcal{P}_{\mathcal D}(x + \delta), y)$ denote a perturbed flow.
Standard split CP calibrates $\hat{q}$ on clean samples $\{(x_i, y_i)\}$.
If the adversarial perturbation $\delta$ shifts the nonconformity score
distribution, the empirical quantile $\hat{q}$ computed from clean samples
is no longer valid for perturbed test samples.

Formally, let $F_{\mathrm{clean}}(s) = \Pr(s(x,y) \leq s)$ and
$F_{\mathrm{adv}}(s) = \Pr(s(x^*,y) \leq s)$ denote the CDFs of
nonconformity scores under clean and adversarial distributions, respectively.
If $F_{\mathrm{adv}} \prec F_{\mathrm{clean}}$ (adversarial scores
stochastically dominate clean scores, reflecting the adversary's goal of
reducing $p_\theta(y \mid x^*)$ for the true class), then:
\begin{equation}
  \Pr(y \in \mathcal{C}(x^*)) = F_{\mathrm{adv}}(\hat{q})
  < F_{\mathrm{clean}}(\hat{q}) \approx 1 - \alpha.
  \label{eq:degradation}
\end{equation}
Equation~\eqref{eq:degradation} formalizes this collapse: the empirical coverage
$F_{\mathrm{adv}}(\hat{q})$ can be arbitrarily lower than the nominal $1{-}\alpha$.
In the extreme case where an adversarial attack drives $p_\theta(y_\mathrm{true} \mid x^*)$
to near-zero, the nonconformity score for the true class approaches its maximum,
$F_{\mathrm{adv}}(\hat{q}) \to 0$, and coverage collapses entirely.
We evaluate this mismatch empirically in Section~\ref{sec:exp:tacpresults}.

\section{Traffic-Aware Conformal Prediction}
\label{sec:method}

\subsection{Calibration on Perturbed Traffic}
\label{sec:method:tacalibration}

Ta-CP operationalizes distribution matching: the calibration set is generated
under the perturbation law expected at test time.
If test-time flows are perturbed with DC perturbations drawn from a distribution
$\Pi_\varepsilon$, then calibrating on flows drawn from the same distribution
restores the exchangeability condition and hence the coverage guarantee.

\begin{algorithm}[t]
  \caption{Traffic-Aware Split Conformal Prediction}
  \label{alg:tacp}
  \begin{algorithmic}[1]
    \Require Trained LLM-IDS $f_\theta$; calibration pool
      $\mathcal{D}_{\mathrm{cal}} = \{(x_i, y_i)\}_{i=1}^n$;
      DC feature mask $\mathbf{m}_{\mathrm{DC}} \in \{0,1\}^d$;
      perturbation budget $\varepsilon$; miscoverage level $\alpha$.
    \Ensure Prediction-set function $\mathcal{C}_\alpha$.
    \State \textbf{// Step 1: Perturb calibration samples}
    \For{$i = 1$ to $n$}
      \State Draw $\xi_i \sim \mathrm{Uniform}(-\varepsilon\,\mathbf{m}_{\mathrm{DC}},\;
                +\varepsilon\,\mathbf{m}_{\mathrm{DC}})$
      \State $\tilde{x}_i \gets \mathcal{P}_{\mathcal{D}}(x_i + \xi_i)$
        \Comment{inverse-transform and project dataset constraints}
    \EndFor
    \State \textbf{// Step 2: Compute nonconformity scores on perturbed data}
    \For{$i = 1$ to $n$}
      \State $s_i \gets s(\tilde{x}_i, y_i)$
        \Comment{Eq.~(\ref{eq:nll-score})}
    \EndFor
    \State \textbf{// Step 3: Compute quantile threshold}
    \State $\hat{q}_{\mathrm{ta}} \gets \quantile\!\left(\{s_i\},\,
           \frac{\lceil(n+1)(1-\alpha)\rceil}{n}\right)$
    \State \textbf{// Step 4: Return prediction-set function}
    \State \Return $\mathcal{C}_\alpha(x) \gets
      \{y \in \mathcal{Y} : s(x, y) \leq \hat{q}_{\mathrm{ta}}\}$
  \end{algorithmic}
\end{algorithm}

Algorithm~\ref{alg:tacp} summarizes ta-CP. The only conformal change is Step~1:
calibration samples are perturbed before score computation. The implementation
adds noise in normalized coordinates, inverse-transforms to the original units,
and applies the dataset-specific projection $\mathcal{P}_{\mathcal{D}}$. This
projection enforces observed bounds, integer and categorical domains, and audited
relations among counts, totals, minima, means, maxima, standard deviations, rates,
durations, and subflow aggregates. It establishes tabular consistency under the
implemented audit; it is not a packet-level realizability certificate.

\begin{remark}[Scope of the DC-Feature Restriction]
  \label{rem:dc-restriction}
  Algorithm~\ref{alg:tacp} perturbs \emph{only} DC features during calibration.
  We justify this choice under the oblivious DC adversary (Section~\ref{sec:threatmodel}):
  the adversary's action space is the DC hypercube
  $\Delta_\mathrm{DC}(\varepsilon) = \{\delta \in \mathbb{R}^d : |\delta_j| \leq
  \varepsilon\,[\mathbf{m}_\mathrm{DC}]_j\}$.
  Restricting calibration to this mask aligns the generated shift with the stated
  threat model; perturbing IC or UC coordinates would instead evaluate a different
  proposal space. The bound $\varepsilon$ applies to the pre-projection DC
  proposal, not to the full projected vector: relation-preserving updates to
  derived coordinates can exceed this bound in normalized units. This is a
  modeling choice, not an optimality or sufficiency claim.
  In particular, membership in the projected DC region does not imply that a
  corresponding packet trace exists.
\end{remark}

\subsection{Theoretical Coverage Guarantee}
\label{sec:method:theorem}

\begin{theorem}[Traffic-Aware Coverage Guarantee]
  \label{thm:ta-coverage}
  Let $(x_i, y_i)_{i=1}^{n+1}$ be drawn i.i.d.\ from $\mathcal{D}$, and let
  $\xi_i \sim \Pi_\varepsilon$ be drawn i.i.d.\ from any perturbation
  distribution $\Pi_\varepsilon$ supported on
  $[-\varepsilon\,\mathbf{m}_\mathrm{DC}, +\varepsilon\,\mathbf{m}_\mathrm{DC}]$,
  independent of the data.
  Let $\hat{q}_\mathrm{ta}$ be the threshold computed in Algorithm~\ref{alg:tacp}.
  Then:
  \begin{equation}
    \Pr\!\left(y_{n+1} \in
    \mathcal{C}_\alpha(\mathcal{P}_{\mathcal{D}}(x_{n+1}+\xi_{n+1}))\right)
    \geq 1-\alpha,
    \label{eq:ta-coverage}
  \end{equation}
  where $\mathcal{C}_\alpha$ is the prediction set from Algorithm~\ref{alg:tacp}
  and the probability is over the joint randomness of data and perturbations.
\end{theorem}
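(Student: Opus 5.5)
The plan is to reduce the statement to the standard split-conformal coverage guarantee~\eqref{eq:coverage} by showing that the perturbed pairs are themselves i.i.d. The argument works directly on the augmented variables.

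\textbf{Step 1: the perturbed pairs are i.i.d.} Define $Z_i = (x_i, y_i, \xi_i)$ for $i = 1, \ldots, n+1$. The pairs $(x_i,y_i)$ are i.i.d.\ from $\mathcal{D}$, the $\xi_i$ are i.i.d.\ from $\Pi_\varepsilon$, and the two families are independent. Hence the $Z_i$ are i.i.d.\ from the product law $\mathcal{D}\otimes\Pi_\varepsilon$. Set $\tilde{x}_i = \mathcal{P}_{\mathcal{D}}(x_i+\xi_i)$ and $S_i = s(\tilde{x}_i, y_i)$. Each $S_i$ is the image of $Z_i$ under one fixed measurable map $g$, built from $\mathcal{P}_{\mathcal{D}}$ and the score~\eqref{eq:nll-score}. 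This map does not depend on the data, because the LLM $f_\theta$ was trained on a disjoint split and is treated as fixed, and $\mathcal{P}_{\mathcal{D}}$ is a fixed projection. Therefore $S_1,\ldots,S_{n+1}$ are i.i.d., and in particular exchangeable.

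\textbf{Step 2: rewrite the coverage event.} By construction of $\mathcal{C}_\alpha$ in Algorithm~\ref{alg:tacp}, the event $y_{n+1}\in\mathcal{C}_\alpha(\tilde{x}_{n+1})$ is exactly the event $S_{n+1}\le \hat{q}_{\mathrm{ta}}$. Here $\hat{q}_{\mathrm{ta}}$ is the $\lceil (n+1)(1-\alpha)\rceil$-th smallest value among $S_1,\ldots,S_n$; if $\lceil (n+1)(1-\alpha)\rceil > n$, it is $+\infty$ and the bound is trivial.

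\textbf{Step 3: quantile lemma.} Let $k=\lceil (n+1)(1-\alpha)\rceil$. Since the $n+1$ scores are exchangeable, $S_{n+1}\le S_{(k)}$ among the first $n$ if and only if $S_{n+1}$ is among the $k$ smallest of all $n+1$ scores. Resolve ties conservatively: in the presence of ties the event $\le$ only becomes larger. Then the rank of $S_{n+1}$ is uniform on $\{1,\ldots,n+1\}$, or stochastically smaller under ties, so the probability is at least $k/(n+1)\ge 1-\alpha$. Applying this to the random threshold $\hat{q}_{\mathrm{ta}}$ gives~\eqref{eq:ta-coverage}, with the probability taken jointly over data and perturbations.

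The main obstacle is Step 1, specifically justifying that the score map is fixed and measurable and is applied identically to calibration and test points. This requires three things. First, $f_\theta$ must be independent of the calibration and test samples. Second, the projection $\mathcal{P}_{\mathcal{D}}$ must not be fit on the calibration data; if it uses observed bounds, those must come from a separate split. Third, the test perturbation must be drawn from the same $\Pi_\varepsilon$ as the calibration perturbations, independently of everything else. The support condition on $\Pi_\varepsilon$ plays no role in the argument beyond matching the threat model. Ties are handled by the conservative counting in Step 3, so no continuity of the score distribution is needed.
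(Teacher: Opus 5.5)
Your proposal is correct and follows the same route as the paper. Both arguments show that the perturbed pairs are i.i.d., hence exchangeable, and reduce to the standard split-CP guarantee; you do this through the augmented variables $Z_i=(x_i,y_i,\xi_i)$. The paper simply cites that guarantee, whereas you also re-derive the rank/quantile lemma (including ties and the $k>n$ case) and make explicit the implicit requirement that $f_\theta$ and $\mathcal{P}_{\mathcal D}$ be fixed independently of the calibration and test samples.
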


\begin{proof}
  The perturbed pairs $\{(\tilde{x}_i, y_i)\}_{i=1}^{n+1}$, where
  $\tilde{x}_i = \mathcal{P}_{\mathcal{D}}(x_i + \xi_i)$, are jointly
  exchangeable: for any permutation
  $\sigma$ of $\{1, \ldots, n+1\}$, the joint distribution of
  $\{(\tilde{x}_{\sigma(i)}, y_{\sigma(i)})\}_{i=1}^{n+1}$ is identical,
  because each pair $(x_i, y_i)$ and $\xi_i$ are i.i.d.\ draws, so any
  permutation of indices preserves the joint distribution.
  Standard split CP applied to exchangeable samples satisfies
  Eq.~(\ref{eq:ta-coverage})~\cite{angelopoulos2021gentle}.
  Since Algorithm~\ref{alg:tacp} applies exactly this procedure to the
  perturbed calibration set and evaluates predictions on perturbed test points,
  the guarantee follows.
\end{proof}

\begin{remark}[Scope of Theorem~\ref{thm:ta-coverage}]
  \label{rem:theorem-scope}
  Theorem~\ref{thm:ta-coverage} assumes test perturbations $\xi_{n+1}$ are
  drawn i.i.d.\ from $\Pi_\varepsilon$.
  The evaluation in Section~\ref{sec:exp:tacpresults} uses independent random
  calibration and test perturbations from the same law, and therefore examines
  the setting covered by the theorem. Optimized, surrogate-transfer,
  score-based target-query, gradient-based white-box, and threshold-aware
  attacks are all outside this guarantee.
\end{remark}

\subsection{A Distributionally Robust Extension}
\label{sec:method:robust}

Theorem~\ref{thm:ta-coverage} certifies coverage only when the test-time
perturbation is drawn i.i.d.\ from the same law used at calibration time. The
surrogate-transfer and worst-of-$k$ attacks evaluated in
the supplemental material violate this assumption because the selected
perturbation depends on $x_i$, so $\xi_i \not\perp x_i$. The following result
extends the guarantee to any test-time process, i.i.d.\ or not, whose induced
nonconformity-score distribution does not move too far from the
calibration-time score distribution, and it lets the empirical surrogate-transfer attacks (supplemental
material) be read as evidence about how far they move it.

\begin{theorem}[Distributionally Robust Traffic-Aware Coverage]
  \label{thm:robust-coverage}
  Let $\hat{q}_\mathrm{ta}$ be the threshold computed in
  Algorithm~\ref{alg:tacp} from $n$ i.i.d.\ calibration pairs perturbed under
  $\Pi_\varepsilon$ as in Theorem~\ref{thm:ta-coverage}, and let
  $S_\mathrm{cal} = s(\mathcal{P}_{\mathcal{D}}(x+\xi), y)$ denote the
  induced nonconformity score under this calibration process. Let $(x', y')$
  be a test pair from \emph{any} process, independent of the calibration
  sample and not necessarily i.i.d.\ in $\xi$, and let $S_\mathrm{test} =
  s(x', y')$ denote its induced score. If
  \begin{equation}
    \delta \;\geq\; d_{\mathrm{TV}}\!\left(\mathrm{Law}(S_\mathrm{cal}),\,
    \mathrm{Law}(S_\mathrm{test})\right),
    \label{eq:tv-bound}
  \end{equation}
  where $d_{\mathrm{TV}}$ is total variation distance, then
  \begin{equation}
    \Pr\!\left(y' \in \mathcal{C}_\alpha(x')\right) \;\geq\; 1-\alpha-\delta.
    \label{eq:robust-coverage}
  \end{equation}
\end{theorem}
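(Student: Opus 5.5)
The plan is to compare the test-time coverage event with a \emph{virtual} matched test point, for which Theorem~\ref{thm:ta-coverage} already gives coverage, and then pay for the switch with the total variation distance. The first step is to reduce coverage to a statement about scores. The prediction set in Algorithm~\ref{alg:tacp} contains $y'$ exactly when $S_\mathrm{test}=s(x',y')\le \hat q_\mathrm{ta}$, so
\[
\Pr\bigl(y'\in\mathcal{C}_\alpha(x')\bigr)=\Pr\bigl(S_\mathrm{test}\le \hat q_\mathrm{ta}\bigr).
\]
Here $\hat q_\mathrm{ta}$ is a measurable function of the calibration scores $S_1,\dots,S_n$ only. These are i.i.d.\ copies of $S_\mathrm{cal}$, and by hypothesis they are independent of $(x',y')$.

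Next, introduce an auxiliary variable $S_{n+1}\sim\mathrm{Law}(S_\mathrm{cal})$, independent of $S_1,\dots,S_n$. Equivalently, $S_{n+1}=s(\mathcal{P}_{\mathcal{D}}(x_{n+1}+\xi_{n+1}),y_{n+1})$ for a fresh matched draw. Theorem~\ref{thm:ta-coverage} then gives $\Pr(S_{n+1}\le\hat q_\mathrm{ta})\ge 1-\alpha$. We work conditionally on the calibration sample. Given $S_{1:n}$, the threshold $\hat q_\mathrm{ta}=q$ is a fixed number, and by independence $S_\mathrm{test}$ and $S_{n+1}$ keep their unconditional laws. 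The set $A_q=(-\infty,q]$ is a fixed Borel set, so the definition of total variation yields
\[
\Pr(S_\mathrm{test}\in A_q)\;\ge\;\Pr(S_{n+1}\in A_q)-d_{\mathrm{TV}}\bigl(\mathrm{Law}(S_\mathrm{cal}),\mathrm{Law}(S_\mathrm{test})\bigr)\;\ge\;\Pr(S_{n+1}\in A_q)-\delta.
\]
Taking expectation over $S_{1:n}$ gives $\Pr(S_\mathrm{test}\le\hat q_\mathrm{ta})\ge\Pr(S_{n+1}\le\hat q_\mathrm{ta})-\delta\ge 1-\alpha-\delta$, which is \eqref{eq:robust-coverage}.

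The main point needing care is the conditioning step, and it is where the independence hypothesis is used. Total variation is a statement about marginal laws. It transfers to the event $\{S\le\hat q_\mathrm{ta}\}$ only because $\hat q_\mathrm{ta}$ can be frozen without changing the law of $S_\mathrm{test}$. I would make this explicit with Fubini, i.e.\ by integrating the conditional inequality over the law of $S_{1:n}$. I would also remark that no exchangeability or i.i.d.\ structure of the test perturbation is required anywhere. The only requirement is that the test score's marginal law lies within $\delta$ of the calibration score law. This is what lets the bound cover adaptive, $x$-dependent attacks whose output is independent of the calibration draw.
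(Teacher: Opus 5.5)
Your proposal is correct and follows essentially the same route as the paper: condition on the calibration sample to freeze $\hat q_{\mathrm{ta}}$, apply the total variation bound to the fixed half-line $(-\infty,q]$, integrate over the calibration sample, and invoke the split-CP guarantee of Theorem~\ref{thm:ta-coverage} for a fresh matched draw. Your explicit auxiliary score $S_{n+1}$ and the Fubini step only make precise what the paper phrases as ``an independent fresh draw'' and ``taking expectation over the calibration sample.''
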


\begin{proof}
  Fix the calibration sample and hence $\hat{q}_\mathrm{ta}$. For any fixed
  threshold $t$ and any two probability measures $P,Q$ on $\mathbb{R}$ with
  CDFs $F_P,F_Q$, the event $A=(-\infty,t]$ gives
  $|F_P(t)-F_Q(t)| = |P(A)-Q(A)| \leq d_{\mathrm{TV}}(P,Q)$.
  Applying this with $P=\mathrm{Law}(S_\mathrm{cal})$,
  $Q=\mathrm{Law}(S_\mathrm{test})$, and $t=\hat{q}_\mathrm{ta}$ (which is a
  function of the calibration sample only, hence independent of the test
  pair) gives, conditional on the calibration sample,
  \begin{equation*}
    \Pr(S_\mathrm{test} \leq \hat{q}_\mathrm{ta}) \;\geq\;
    \Pr(S_\mathrm{cal} \leq \hat{q}_\mathrm{ta}) - \delta,
  \end{equation*}
  where the right-hand probability is over an independent fresh draw from
  $\mathrm{Law}(S_\mathrm{cal})$. Taking expectation over the calibration
  sample, the right-hand term is exactly the quantity bounded in the proof of
  Theorem~\ref{thm:ta-coverage}: $\mathbb{E}[\Pr(S_\mathrm{cal} \leq
  \hat{q}_\mathrm{ta})] \geq 1-\alpha$. Since
  $\{y' \in \mathcal{C}_\alpha(x')\} = \{S_\mathrm{test} \leq
  \hat{q}_\mathrm{ta}\}$ by construction of $\mathcal{C}_\alpha$
  (Algorithm~\ref{alg:tacp}, Step~4), combining the two bounds gives
  Eq.~(\ref{eq:robust-coverage}).
\end{proof}

\begin{remark}[Reading $\delta$ empirically: a lower bound, not an estimate]
  \label{rem:empirical-delta}
  Theorem~\ref{thm:robust-coverage} does not certify $\delta$ for an unknown
  attacker. Because Eq.~(\ref{eq:robust-coverage}) rearranges to
  $\delta \geq (1-\alpha) - \Pr(y' \in \mathcal{C}_\alpha(x'))$, an observed
  coverage gap gives only a \emph{lower bound} on $\delta$ for the evaluated
  attack, never an upper bound or a point estimate, and never a certificate
  against a stronger, unevaluated adversary: a large $\delta$ merely fails
  to \emph{guarantee} high coverage, it does not force coverage to be low.
  Consequently, an attack that leaves ta-CP coverage close to nominal yields
  an uninformative (near-zero) bound, while only a substantial coverage drop
  yields a nontrivial one. The label-free XGBoost surrogate-transfer attacks
  (supplemental material) leave coverage close to nominal even at worst-of-20
  budget and $66$--$72\%$ attack success, so their certified $\delta$ is
  near-zero and uninformative; the direct, target-informed attack
  (Table~\ref{tab:direct-attack}), which queries the target LLM's own score
  at every round, drives coverage substantially below nominal and yields the
  only nontrivial bound in this paper: $\delta \geq 0.075$ for LLaMA3-8B and
  $\delta \geq 0.19$ for Qwen3-8B. This asymmetry indicates that direct query
  access to the target's own score, not candidate budget or attack success
  rate alone, is the threat-model dimension that matters most here.
\end{remark}

\subsection{Exact Coverage via Attack-Orbit-Invariant Representations}
\label{sec:method:invariance}

Theorem~\ref{thm:robust-coverage} bounds the coverage gap in terms of a
total variation distance $\delta$ that can only be certified as a lower
bound after the attack is observed (Remark~\ref{rem:empirical-delta}). This
section shows that the DC/IC/UC taxonomy already introduced in
Section~\ref{sec:threatmodel:taxonomy} supports a strictly stronger
guarantee for a restricted representation of the input: not a bound on the
gap, but its elimination, together with the precise condition under which
this is possible and a single identity that recovers
Theorem~\ref{thm:robust-coverage} as a special case.

\begin{definition}[Attack orbit]
  \label{def:orbit}
  Under the oblivious DC adversary (Definition~\ref{def:adversary}), the
  \emph{attack orbit} of a flow $x$ is
  \begin{equation}
    O(x) = \{x\} \cup \{\mathcal{P}_{\mathcal D}(x+\delta) :
    \delta \in \Delta_\mathrm{DC}(\varepsilon)\}.
    \label{eq:orbit}
  \end{equation}
\end{definition}

\begin{definition}[Orbit-invariant representation]
  \label{def:invariant-rep}
  A measurable map $\phi : \mathcal{X} \to \mathcal{Z}$ is
  \emph{orbit-invariant} if $\phi(x') = \phi(x)$ for every $x \in \mathcal{X}$
  and every $x' \in O(x)$.
\end{definition}

By construction, $\mathcal{P}_{\mathcal D}$ can change a coordinate only if it
is a DC coordinate or a deterministic function of DC coordinates through an
audited recompute relation (e.g., a rate recomputed as count over duration,
where count is DC); every other coordinate is fixed pointwise by
$\mathcal{P}_{\mathcal D}$ for every $\delta \in \Delta_\mathrm{DC}(\varepsilon)$.
This does \emph{not} mean every non-DC feature is orbit-invariant in the
sense of Definition~\ref{def:invariant-rep}: a feature outside the DC mask
that is nonetheless a deterministic descendant of a DC feature (e.g., the
recomputed rate itself) still changes over the orbit. We therefore refine
the DC/IC/UC taxonomy operationally into three empirically determinable
roles for a given feature set: DC features, deterministic descendants of DC
features, and the residual, orbit-invariant coordinates. Because the
recompute relations enforced by $\mathcal{P}_{\mathcal D}$ are implemented
procedurally rather than published as closed-form dependency graphs, we
identify descendants by probing $\mathcal{P}_{\mathcal D}$ directly
(Algorithm~\ref{alg:descendant}) rather than assuming a known dependency
graph.

\begin{algorithm}[t]
  \caption{Empirical Descendant Detection}
  \label{alg:descendant}
  \begin{algorithmic}[1]
    \Require Held-out probe rows $\{x_j\}_{j=1}^m$; projection
      $\mathcal{P}_{\mathcal D}$; DC mask $\mathbf{m}_\mathrm{DC}$;
      perturbation budget $\varepsilon$; number of trials $T$.
    \Ensure Descendant coordinate set $\mathcal{R}$.
    \State $\mathcal{R} \gets \emptyset$
    \For{$t = 1$ to $T$}
      \For{$j = 1$ to $m$}
        \State Draw $\delta_j \sim \mathrm{Uniform}(-\varepsilon\,\mathbf{m}_\mathrm{DC},\,
               +\varepsilon\,\mathbf{m}_\mathrm{DC})$
        \State $x_j' \gets \mathcal{P}_{\mathcal D}(x_j+\delta_j)$
        \For{each coordinate $k \notin \mathbf{m}_\mathrm{DC}$}
          \If{$x_j'[k] \neq x_j[k]$}
            \State $\mathcal{R} \gets \mathcal{R} \cup \{k\}$
          \EndIf
        \EndFor
      \EndFor
    \EndFor
    \State \Return $\mathcal{R}$ \Comment{descendants; UC-only set is
      $\{1,\ldots,d\} \setminus (\mathbf{m}_\mathrm{DC} \cup \mathcal{R})$}
  \end{algorithmic}
\end{algorithm}

Algorithm~\ref{alg:descendant} empirically constructs a candidate
orbit-invariant coordinate set: it flags a coordinate as a descendant, and
removes it, if it changes under \emph{any} of the $T$ independent random DC
perturbations tried across the probe set, so every dependency the probe
observes triggers removal. Finite probing cannot certify the converse---a
descendant relation realized only outside the $T$ sampled perturbations, or
under a probe-set input the $T$ trials never covered, would not be flagged,
so the returned coordinate set is not guaranteed to be a subset of the true
orbit-invariant coordinates in general. Theorem~\ref{thm:invariant-coverage}
gives an exact guarantee only for a $\phi$ that is genuinely orbit-invariant;
Section~\ref{sec:exp:invariance} verifies that the coordinate set
Algorithm~\ref{alg:descendant} returns for RT-IoT2022 remains byte-identical
under the specific threat configuration evaluated there ($T=8$ probe trials,
$\varepsilon=0.15$, and the worst-of-20 attack), not that it is certified
invariant against every conceivable perturbation the DC/IC/UC threat model
permits. $\phi_{\mathrm{UC}}$ denotes the projection of $x$ onto this
UC-only coordinate set.

\begin{theorem}[Exact Coverage via Orbit-Invariant Representation]
  \label{thm:invariant-coverage}
  Let $\phi$ be orbit-invariant (Definition~\ref{def:invariant-rep}), and let
  split CP (Section~\ref{sec:background:cp}) be applied with a nonconformity
  score $s(\phi(x),y)$ that depends on $x$ only through $\phi(x)$. Let
  $(x_i,y_i)_{i=1}^{n+1}$ be exchangeable, let calibration scores be computed
  from $\phi(x_i)$ for $i \leq n$ (on clean or matched-perturbed calibration
  data), and let $\hat{q}$ be the resulting threshold. Then, for \emph{any}
  adversary $a$ with $a(x) \in O(x)$---randomized, adaptive, white-box, and
  possibly aware of the calibration set, $\hat{q}$, and the true test
  label---the following holds for every realization, not merely in
  distribution:
  \begin{equation}
    \mathcal{C}_{\hat{q}}(a(x_{n+1})) = \mathcal{C}_{\hat{q}}(x_{n+1}),
    \label{eq:pathwise-identity}
  \end{equation}
  and consequently
  \begin{equation}
    \Pr\!\left(y_{n+1} \in \mathcal{C}_{\hat{q}}(a(x_{n+1}))\right)
    = \Pr\!\left(y_{n+1} \in \mathcal{C}_{\hat{q}}(x_{n+1})\right)
    \geq 1-\alpha,
    \label{eq:invariant-coverage}
  \end{equation}
  with \emph{no} gap term, regardless of the adversary's power.
\end{theorem}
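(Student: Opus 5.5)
The proof splits into two parts: a pathwise step that holds for each realization separately, and a probabilistic step that only needs standard split CP on clean data.

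\textbf{Pathwise identity.} Fix any realization of the calibration sample, the threshold $\hat{q}$, the test pair $(x_{n+1},y_{n+1})$, and whatever internal randomness the adversary uses. Whatever the adversary's information or strategy, its output $x' = a(x_{n+1})$ must lie in $O(x_{n+1})$ by assumption. Definition~\ref{def:invariant-rep} therefore gives $\phi(x') = \phi(x_{n+1})$. Since the score depends on its input only through $\phi$, this yields $s(\phi(x'),y) = s(\phi(x_{n+1}),y)$ for every $y \in \mathcal{Y}$. The threshold $\hat{q}$ is one fixed number in this realization. So the defining condition $s(\cdot,y)\le\hat{q}$ of Eq.~(\ref{eq:predset}) holds for $x'$ exactly when it holds for $x_{n+1}$, label by label. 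This gives Eq.~(\ref{eq:pathwise-identity}). I expect this step to be immediate. The only care needed is to state clearly that $\hat{q}$ is computed before the adversary acts. Adversary knowledge of $\hat{q}$, of the calibration set, or of $y_{n+1}$ affects only which orbit point is chosen, and the identity holds for every orbit point.

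\textbf{Coverage.} Equation~(\ref{eq:pathwise-identity}) implies that the events $\{y_{n+1}\in\mathcal{C}_{\hat{q}}(a(x_{n+1}))\}$ and $\{y_{n+1}\in\mathcal{C}_{\hat{q}}(x_{n+1})\}$ are the same event. Their probabilities under the joint law of data and adversary randomness are therefore equal, which is the equality in Eq.~(\ref{eq:invariant-coverage}). The right-hand event involves no adversary, so the lower bound $1-\alpha$ is the standard split-CP guarantee, Eq.~(\ref{eq:coverage}). It applies to the scores $s(\phi(x_i),y_i)$, $i\le n+1$, which are exchangeable because they are the same fixed measurable function applied to exchangeable pairs.

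\textbf{Main obstacle: matched-perturbed calibration.} The statement also allows the calibration scores to be computed on perturbed inputs $\tilde{x}_i = \mathcal{P}_{\mathcal D}(x_i+\xi_i)$. Here I would use invariance a second time. Each $\tilde{x}_i$ lies in $O(x_i)$ by Definition~\ref{def:orbit}, so $\phi(\tilde{x}_i) = \phi(x_i)$. Hence the calibration scores, and therefore $\hat{q}$ itself, are identical to the clean-calibration values in every realization, and the exchangeability argument above applies without change. One residual subtlety should be noted: ties among scores. The split-CP bound as quoted in Eq.~(\ref{eq:coverage}) is the $\geq$ form, which holds with ties, so no tie-breaking randomization is needed.
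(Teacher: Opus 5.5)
Your proposal is correct and follows the paper's proof essentially step for step. Orbit-invariance makes the score, and hence the prediction set, constant on each orbit at the threshold $\hat q$, so the coverage events coincide and the clean split-CP bound carries over; matched-perturbed calibration reduces to the clean case because every calibration score is unchanged, and the paper's only extra remark, that measurability of $a$ matters only for the probability to be defined, is already implicit in your event-equality argument.
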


\begin{proof}
  Since $s$ depends on $x$ only through $\phi(x)$, orbit-invariance gives
  $s(\phi(x'),y) = s(\phi(x),y)$ for every $y$ and every
  $x' \in O(x)$, hence for any fixed threshold $t$,
  $\mathcal{C}_t(x') = \{y : s(\phi(x'),y)\leq t\} =
  \{y: s(\phi(x),y)\leq t\} = \mathcal{C}_t(x)$: the prediction-set map is
  constant on orbits for every threshold. If calibration is computed on
  clean data, $\hat{q}$ is a function of $(\phi(x_i),y_i)_{i\leq n}$ only and
  is therefore independent of the test-time attack; applying orbit-constancy
  at $t=\hat{q}$, $x = x_{n+1}$, $x'=a(x_{n+1})\in O(x_{n+1})$ gives
  Eq.~\eqref{eq:pathwise-identity} pointwise, and $y_{n+1}$ is untouched by
  the attack, so the coverage indicator is identical outcome by outcome;
  taking expectations and invoking the standard split-CP guarantee on the
  clean side gives Eq.~\eqref{eq:invariant-coverage}. If calibration is
  itself matched-perturbed as in Algorithm~\ref{alg:tacp}, each calibration
  score $s(\phi(\mathcal{P}_{\mathcal D}(x_i+\xi_i)),y_i)$ equals
  $s(\phi(x_i),y_i)$ by the same orbit-invariance, so the multiset of
  calibration scores, and hence $\hat{q}$, is unchanged; the argument reduces
  to the clean-calibration case. No measurability or independence
  restriction on $a$ is required for Eq.~\eqref{eq:pathwise-identity}, since
  it is a pointwise identity for every fixed outcome; measurability of $a$ is
  only needed for the probability in Eq.~\eqref{eq:invariant-coverage} to be
  well defined, and there the event coincides with the clean event, which is
  measurable.
\end{proof}

Theorem~\ref{thm:invariant-coverage} is qualitatively different from
Theorem~\ref{thm:robust-coverage}: the conclusion is a pathwise identity, not
a probabilistic bound, so no amount of adversary knowledge (of the
calibration set, $\hat{q}$, or the true label) can subvert it, because there
is no longer anything probabilistic left for the adversary to exploit. This
is also the reason the theorem fails without orbit-invariance: if $\phi$ is
not orbit-invariant, a white-box adversary can set
$a(x_{n+1}) = \arg\max_{x'\in O(x_{n+1})} s(\phi(x'),y_{n+1})$, which is
exactly the mechanism Theorem~\ref{thm:robust-coverage} can only bound and
that Table~\ref{tab:direct-attack} and
Table~\ref{tab:invariant-representations} (Full-feature rows) realize
empirically as coverage collapse.

\begin{theorem}[Gap Identity]
  \label{thm:gap-identity}
  Let $\hat{q}$ be fixed given the calibration sample and let
  $S=s(\phi(X),Y)$ be the clean true-label score. For a point $(x,y)$ define
  the \emph{orbit score-inflation}
  \begin{equation}
    G(x,y) = \sup_{x'\in O(x)} s(\phi(x'),y) - s(\phi(x),y) \;\geq\; 0,
    \label{eq:orbit-inflation}
  \end{equation}
  which is identically zero whenever $\phi$ is orbit-invariant (the converse
  need not hold: $G\equiv 0$ only requires the induced score, not $\phi$
  itself, to be constant on orbits---see Remark~\ref{rem:invariant-scope}).
  Against the worst-case adaptive adversary, the coverage lost relative to
  the clean distribution satisfies the \emph{exact} identity
  \begin{equation}
    \mathrm{Gap} := \Pr(S \leq \hat{q}) - \Pr(S+G(X,Y) \leq \hat{q})
    = \Pr\!\left(\hat{q}-G(X,Y) < S \leq \hat{q}\right).
    \label{eq:gap-identity}
  \end{equation}
  If, in addition, the density of $S$ is bounded by $L$ in a neighborhood of
  $\hat{q}$ and $G \leq \bar{G}$ almost surely, then
  $\mathrm{Gap} \leq L\bar{G}$; this bound is $0$ exactly when $\bar{G}=0$,
  though $\mathrm{Gap}$ itself can be $0$ for a specific distribution even
  when $\bar{G}>0$.
\end{theorem}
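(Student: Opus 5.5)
The argument is elementary; the only real work is pinning down what ``the worst-case adaptive adversary'' means, so that $S+G(X,Y)$ is its attacked score. I would proceed in three steps.

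\textbf{Step 1: fix the adversary and the sign of $G$.} Treat $\hat{q}$ as fixed, by conditioning on the calibration sample, which is independent of the test pair. Note that $G\ge 0$ because $x\in O(x)$ (Definition~\ref{def:orbit}), so the supremum runs over a set containing $x$ itself. The worst-case adaptive adversary maps $x$ to a score-maximizing point of the orbit, so the attacked score is $s(\phi(x'),y)=S+G$. For this I would assume the supremum is attained. If it is not, I would define the worst case through the supremum directly, with coverage event $\{\sup_{x'\in O(x)}s(\phi(x'),y)\le\hat{q}\}$. This is the same event as ``every orbit point is covered'' and equals $\{S+G\le\hat{q}\}$, so the identity does not depend on attainment.

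\textbf{Step 2: the exact identity.} Since $G\ge 0$, the event $\{S+G\le\hat{q}\}$ is contained in $\{S\le\hat{q}\}$. Their difference is therefore a probability of a set difference:
\[
\{S\le\hat{q}\}\setminus\{S+G\le\hat{q}\}=\{S\le\hat{q},\ S>\hat{q}-G\}.
\]
This gives Eq.~\eqref{eq:gap-identity} with no inequality anywhere. If $\phi$ is orbit-invariant, then $G\equiv 0$ and the interval $(\hat{q},\hat{q}]$ is empty, so the Gap vanishes, consistent with Theorem~\ref{thm:invariant-coverage}.

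\textbf{Step 3: the density bound.} On $\{G\le\bar{G}\}$, which has probability one, the event $\{\hat{q}-G<S\le\hat{q}\}$ is contained in $\{\hat{q}-\bar{G}<S\le\hat{q}\}$. Hence
\[
\mathrm{Gap}\le\int_{\hat{q}-\bar{G}}^{\hat{q}}f_S\le L\bar{G}.
\]
This uses the density bound on $[\hat{q}-\bar{G},\hat{q}]$, so I must require the neighborhood of $\hat{q}$ on which the density is bounded by $L$ to contain this interval. If it does not, I would state the bound for $\bar{G}$ no larger than the neighborhood radius. The bound is zero exactly when $\bar{G}=0$, since $L>0$ for a genuine density. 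The closing remark, that the Gap can vanish even when $\bar{G}>0$, needs only an example: take $S$ with no mass in $(\hat{q}-\bar{G},\hat{q}]$, for instance a score distribution with a gap just below $\hat{q}$.

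\textbf{Main obstacle.} The hard part is not the algebra but making ``worst-case adversary'' precise and measurable. I would first try the supremum-event formulation from Step 1, which avoids choosing a maximizer. I would then invoke a measurable-selection argument only if an explicit map $a$ is wanted.
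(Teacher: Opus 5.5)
Your proposal is correct and matches the paper's proof. Worst-case attacked coverage is the probability of the supremum event $\{S+G(X,Y)\le\hat q\}$; the inclusion $\{S+G\le\hat q\}\subseteq\{S\le\hat q\}$, which follows from $G\ge 0$, turns the Gap into $\Pr(\hat q-G<S\le\hat q)$; and the almost-sure bound $G\le\bar G$ reduces this to $F_S(\hat q)-F_S(\hat q-\bar G)\le L\bar G$.

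Your extra requirement that the neighborhood where the density is bounded by $L$ contain $[\hat q-\bar G,\hat q]$ is needed for $\mathrm{Gap}\le L\bar G$ to hold, and the paper's proof assumes it without saying so. Your one slip is calling $L>0$ automatic for a density: a density can vanish on a whole neighborhood of $\hat q$, so $L>0$ is a harmless convention rather than a consequence.
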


\begin{proof}
  The worst-case attacked coverage is
  $\Pr(\sup_{x'\in O(X)} s(\phi(x'),Y) \leq \hat{q}) =
  \Pr(S+G(X,Y)\leq\hat{q})$ by definition of $G$. Subtracting from the clean
  coverage $\Pr(S\leq\hat{q})$, and using $\{S+G\leq\hat q\}\subseteq
  \{S\leq\hat q\}$ since $G\geq 0$, gives
  $\mathrm{Gap} = \Pr(S\leq\hat q) - \Pr(S+G\leq\hat q)
  = \Pr(\hat q - G < S \leq \hat q)$, the stated identity. Conditioning on
  $G\leq\bar G$, $\Pr(\hat q-G<S\leq\hat q) \leq \Pr(\hat q - \bar G < S \leq
  \hat q) = F_S(\hat q) - F_S(\hat q-\bar G) \leq L\bar G$ under the bounded-density
  assumption; $\bar G=0$ empties the interval and gives $\mathrm{Gap}=0$.
\end{proof}

\begin{remark}[Relation to Theorem~\ref{thm:robust-coverage} and existing
  robust conformal prediction]
  \label{rem:gap-unifies}
  Eq.~\eqref{eq:gap-identity} is a single functional with two regimes.
  Existing robust conformal prediction methods
  \cite{gendler2022adversarially,ghosh2023probabilistically,yan2024provably,
  zargarbashi2025robust,massena2025efficient,luo2026gametheoretic} fix the
  score's input representation and \emph{bound} $\bar{G}$ (via a Lipschitz
  constant, a smoothing radius, or a quantile-of-quantile construction), then
  read off $\mathrm{Gap}\leq L\bar{G}$: the $\bar{G}>0$ regime, with a
  necessarily probabilistic certificate. Theorem~\ref{thm:invariant-coverage}
  is the achievability of the complementary $\bar{G}=0$ endpoint of the same
  functional, obtained by choosing $\phi$ so the adversary's action is the
  identity map on it rather than by bounding the score's sensitivity to it.
  Theorem~\ref{thm:robust-coverage} is the special case in which $\bar G$ is
  left unbounded and the shift is measured directly in total variation
  distance rather than through $G$.
\end{remark}

\begin{remark}[Scope of the invariant-representation guarantee: sufficiency,
  not an unconditional converse]
  \label{rem:invariant-scope}
  Orbit-invariance of $\phi$ is \emph{sufficient} for exact zero-gap coverage
  (Theorem~\ref{thm:invariant-coverage}), but it is not necessary: a
  representation can vary over an orbit and still yield zero gap, provided
  the induced \emph{decision} $\mathcal{C}_{\hat q}$ does not change, i.e.,
  the variation stays within a level set of the score. What \emph{is} true
  is a completeness statement restricted to the class of orbit-invariant
  procedures: writing $\sim$ for the equivalence relation generated by
  $x' \in O(x)$ and $\pi$ for the associated quotient map, every
  orbit-invariant $\phi$ factors through $\pi$, so $I(\phi(X);Y) \leq
  I(\pi(X);Y)$; $\pi$ is therefore the information-maximal orbit-invariant
  representation, and no orbit-invariant procedure can be more efficient
  than split CP applied to $\pi(X)$ (or a sufficient statistic of it, such as
  $\phi_\mathrm{UC}$). This statement is scoped to the orbit-invariant class;
  it does not assert that every zero-gap procedure is orbit-invariant, and we
  do not claim a converse beyond this scope.
\end{remark}

The remainder of the paper evaluates $\phi_\mathrm{UC}$, the empirical
UC-only representation constructed by Algorithm~\ref{alg:descendant}, as a
concrete instance of Theorem~\ref{thm:invariant-coverage} on RT-IoT2022
(Section~\ref{sec:exp:invariance}), together with a representation that
removes only the literal DC coordinates without descendant detection (which
Remark~\ref{rem:invariant-scope} predicts need not be orbit-invariant, and is
evaluated to quantify how large a residual gap Eq.~\eqref{eq:gap-identity}
permits when it is not).

\subsection{Abstention as an Uncertainty Signal}
\label{sec:method:abstention}

When $|\mathcal{C}(x)|=0$, no class meets the calibrated confidence threshold.
We treat this, and any $|\mathcal{C}(x)|\neq1$ outcome, as an explicit
escalation trigger to a human analyst rather than a null prediction, turning
the CP set into a principled operational rule without any additional
hyperparameter. The supplemental material summarizes which claims in this
paper are formally guaranteed versus empirical, and reports the escalation
rate observed under each experiment.

\section{Experimental Evaluation}
\label{sec:experiments}

\subsection{Setup}
\label{sec:exp:setup}

\textbf{Datasets.}
We evaluate on three public network intrusion benchmarks:
\textbf{CIC-IDS-2018}~\cite{sharafaldin2018cicids} (25 flow-level features;
five labels, but the held-out test partition contains no Probe samples, so
CIC coverage numbers are four-class estimates with Probe retained only as a
candidate label); \textbf{RT-IoT2022} (83 FlowMeter features from IoT/IIoT
traffic; all five mapped classes evaluated); and \textbf{HIKARI-2021}~\cite{hikari2021}
(\texttt{HIKARI2022.csv}, 228{,}253 rows, 88 source columns; three observed
mapped classes, with all five labels remaining candidates at inference).

\textbf{Models.}
The main analysis uses LLaMA3-8B~\cite{llama3} checkpoints trained separately
for each dataset. Fixing the main analysis to one architecture keeps the
score definition, prompt boundary, split construction, and checkpoint
provenance fixed while testing whether the proposed calibration mechanism
behaves consistently across datasets. Section~\ref{sec:exp:crossarch} reports
cross-architecture replications with Qwen3-8B~\cite{qwen3} and
Mistral-7B~\cite{mistral7b} checkpoints, each trained under the identical
protocol, partition, and sampling seeds.

\textbf{CP Protocol.}
We employ split CP (Algorithm~\ref{alg:tacp}) with $\alpha = 0.05$ throughout,
targeting 95\% marginal coverage. Fine-tuning and held-out evaluation are
separated by the exact serialized prompt under a fixed partition seed, so no
model-visible prompt group occurs in both pools; prompt groups with
conflicting labels are excluded before sampling. Calibration and test
perturbations are independent uniform draws from the same DC-feature
distribution at $\varepsilon\in\{0.15,0.30\}$ in normalized coordinates,
inverse-transformed and projected by $\mathcal{P}_{\mathcal D}$, which
audits bounds, domains, and dataset-specific relations; these runs test the
matched-distribution claim in Theorem~\ref{thm:ta-coverage} and are not
optimized attacks. The main runs employ $n_\mathrm{cal}=n_\mathrm{test}=300$
and report mean $\pm$ std across sampling seeds 42, 123, and 456. Full
provenance auditing (row and prompt hashing, excluded-group counts,
checkpoint metadata) is detailed in the supplemental material.

\textbf{Metrics.}
We report marginal coverage, average prediction set size, singleton/empty-set/escalation
rates (escalation meaning $|\mathcal{C}(x)|\ne 1$), and the quantile
threshold $\hat q$, all computed on the test split.

\subsection{Calibration Mismatch and Coverage}
\label{sec:exp:tacpresults}

Table~\ref{tab:tacp} asks whether a threshold calibrated on clean traffic covers
scores after the matched DC perturbation and whether recalibrating on the same
perturbation law restores marginal coverage. Standard CP loses coverage on all
three benchmarks. Ta-CP moves coverage to approximately the nominal 0.95 level,
consistent with Theorem~\ref{thm:ta-coverage}.

\begin{table*}[t]
  \centering
  \caption{Matched-perturbation results for LLaMA3-8B ($\alpha=0.05$,
           $n_\mathrm{cal}=n_\mathrm{test}=300$). Values are mean $\pm$ standard
           deviation across three sampling seeds. ``Std. cov.'' uses the clean
           threshold on perturbed test rows. ``ta cov.'' and ``ta size'' employ a
           threshold calibrated on independent perturbations from the same law.}
  \label{tab:tacp}
  \small
  \setlength{\tabcolsep}{4pt}
  \begin{tabular}{lccccc}
    \toprule
    Dataset & $\varepsilon$ & Std. cov. & ta cov. & ta size & Escalation \\
    \midrule
    HIKARI-2021 & 0.15 & $0.8489\pm0.0185$ & $0.9456\pm0.0087$ & $1.0444\pm0.0253$ & $0.0444\pm0.0253$ \\
                & 0.30 & $0.8367\pm0.0109$ & $0.9467\pm0.0082$ & $1.0722\pm0.0134$ & $0.0722\pm0.0134$ \\
    \midrule
    RT-IoT2022 & 0.15 & $0.0900\pm0.0152$ & $0.9667\pm0.0094$ & $4.2656\pm0.0397$ & $0.9644\pm0.0057$ \\
               & 0.30 & $0.0800\pm0.0196$ & $0.9422\pm0.0273$ & $4.2489\pm0.1476$ & $0.9422\pm0.0368$ \\
    \midrule
    CIC-IDS-2018 & 0.15 & $0.3900\pm0.0216$ & $0.9500\pm0.0170$ & $4.2122\pm0.3535$ & $1.0000\pm0.0000$ \\
                & 0.30 & $0.3611\pm0.0211$ & $0.9678\pm0.0150$ & $4.5411\pm0.1534$ & $1.0000\pm0.0000$ \\
    \bottomrule
  \end{tabular}
\end{table*}

The coverage result alone is insufficient. HIKARI-2021 recovers coverage with
near-singleton sets, whereas RT-IoT2022 and CIC-IDS-2018 recover coverage mainly
by returning multiple labels. The RT and CIC results are coverage recovery but
not efficient decision recovery.

\subsection{Ranking Loss Explains the Efficiency Boundary}

To distinguish threshold mismatch from loss of class information, we rank all
five candidate labels by their NLL and record the true-label rank. Table~\ref{tab:rank}
shows that HIKARI true labels remain within the top two under every evaluated
perturbation. RT and CIC true labels frequently fall below the top two.
A conformal threshold can include these labels, but it cannot restore the
classifier's ranking; large valid sets are therefore expected.

\begin{table}[t]
  \centering
  \caption{True-label ranking under matched perturbations. Each row pools three
           sampling seeds (900 test rows).}
  \label{tab:rank}
  \setlength{\tabcolsep}{4pt}
  \begin{tabular}{lccccc}
    \toprule
    Dataset & $\varepsilon$ & Top-1 & Top-2 & Top-3 & Top-4 \\
    \midrule
    HIKARI & 0.15 & 0.9233 & 1.0000 & 1.0000 & 1.0000 \\
           & 0.30 & 0.9056 & 1.0000 & 1.0000 & 1.0000 \\
    RT-IoT & 0.15 & 0.1478 & 0.5211 & 0.7744 & 0.9189 \\
           & 0.30 & 0.1433 & 0.4633 & 0.7456 & 0.9178 \\
    CIC-IDS & 0.15 & 0.3211 & 0.5900 & 0.8744 & 0.9222 \\
            & 0.30 & 0.3022 & 0.5900 & 0.8744 & 0.9222 \\
    \bottomrule
  \end{tabular}
\end{table}

This separation yields the main empirical boundary: traffic-aware calibration
repairs coverage loss caused by a matched score shift, but efficient recovery
requires the perturbed model to preserve the true label near the top of its
candidate ranking. HIKARI is the efficient regime in the evaluated setting;
RT and CIC are coverage-only regimes.

\begin{figure}[t]
  \centering
  \includegraphics[width=\columnwidth]{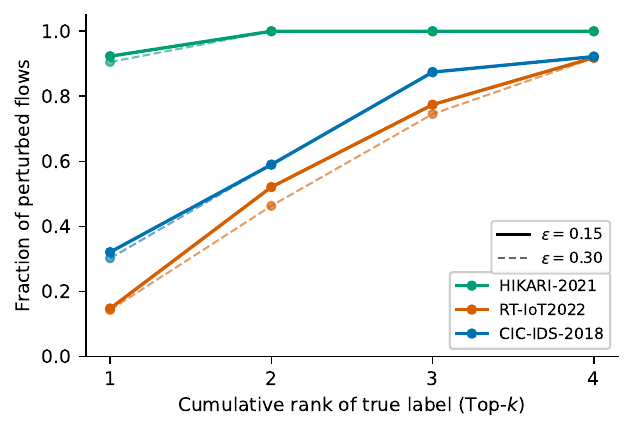}
  \caption{Cumulative true-label rank under matched DC perturbation
           (Table~\ref{tab:rank}), pooled across three sampling seeds
           (900 test rows per curve). HIKARI true labels remain within the
           top two candidates almost always; RT-IoT2022 and CIC-IDS-2018 true
           labels frequently fall to rank three or four, which is why ta-CP
           must return large sets to cover them.}
  \label{fig:rank-distribution}
\end{figure}

\subsection{Classwise Coverage, Mondrian Calibration, and a Non-LLM Baseline}
\label{sec:exp:classwise}

The marginal-only guarantee of Theorem~\ref{thm:ta-coverage} does not certify
classwise coverage: under marginal ta-CP, specific rare classes (HIKARI-2021
Exploitation, CIC-IDS-2018 CredentialAccess) fall as low as $0.48$--$0.60$
coverage even though marginal coverage is near nominal, and increasing the
calibration sample size does not resolve this. We prove a class-conditional
(Mondrian) extension of Theorem~\ref{thm:ta-coverage} that computes a
separate perturbed-calibration threshold per label and certifies per-class
coverage under the same matched-perturbation assumption, handling the
finite-sample edge case in which a class has too few calibration examples
for the required order statistic to exist. Applying it to the same
calibration and test rows already collected raises the worst observed
per-class coverage from $0.48$ to $0.98$ (HIKARI-2021) and from $0.60$ to
$0.94$ (CIC-IDS-2018), at a real efficiency cost on HIKARI-2021 and close to
no additional cost on the two benchmarks where marginal ta-CP was already
least efficient. The full theorem statement and proof, the per-class
breakdown, and the corresponding table and figure are given in the
supplemental material.

To situate the LLM results against a non-LLM classifier, we train an
XGBoost classifier on the identical partition, group-disjoint sampling, and
perturbation protocol, substituting its predicted class probabilities into
the same nonconformity score. XGBoost reproduces the identical qualitative
pattern observed for both LLM architectures on all three benchmarks:
standard CP collapses under matched DC perturbation, ta-CP restores coverage
to approximately nominal, and the same coverage-through-large-sets boundary
separates HIKARI-2021 from RT-IoT2022 and CIC-IDS-2018. Because XGBoost and
the two LLM architectures are unrelated model families with unrelated score
constructions and all reproduce the same per-dataset coverage-efficiency
split under the identical protocol, the boundary is best explained as a
property of the DC-perturbed traffic/label relationship in these benchmarks,
not an artifact of one classifier family. The full results table is given in
the supplemental material.

\subsection{Comparison to Weighted Conformal Prediction}
\label{sec:exp:weightedcp}

The calibration mismatch in Section~\ref{sec:threatmodel:degradation} is an
instance of covariate shift, for which an existing, off-the-shelf remedy
already exists: weighted split CP~\cite{tibshirani2019conformal} reweights
calibration nonconformity scores by an estimated density ratio $w(x) =
dQ/dP(x)$ between the test-time covariate distribution $Q$ and the
calibration-time distribution $P$, without modifying the calibration data
itself. This raises a direct question for the proposed method: does
reweighting already solve the coverage problem that ta-CP solves by
recalibrating on perturbed traffic?

We evaluate weighted CP under the identical matched-perturbation scenario as
Table~\ref{tab:tacp}: clean calibration, DC-perturbed test, same partition,
sampling seeds, and $\varepsilon$ values, so that all three methods---standard
CP, weighted CP, and ta-CP---are compared on exactly the same calibration and
test rows. The density ratio is estimated with the standard classifier trick:
a logistic regression is trained to discriminate clean-calibration covariates
from perturbed-test covariates using only the DC-feature coordinates (the
only coordinates the perturbation touches), giving $\hat w(x) = \hat
h(x)/(1-\hat h(x))$ from the classifier's predicted probability $\hat h(x)$;
the weighted conformal quantile follows~\cite[Eq.~3]{tibshirani2019conformal}.
No LLM inference is required for this comparison: the already-computed,
promoted LLaMA3-8B nonconformity scores from Table~\ref{tab:tacp} are reused
directly, and only the calibration weighting changes.

\begin{table}[t]
  \centering
  \caption{Weighted CP~\cite{tibshirani2019conformal} under the identical
           matched-perturbation scenario as Table~\ref{tab:tacp}, LLaMA3-8B
           ($\alpha=0.05$, mean $\pm$ std across the same three sampling
           seeds). ``Std.\ cov.'' and ``ta cov.'' are repeated from
           Table~\ref{tab:tacp} for direct comparison.}
  \label{tab:weightedcp}
  \small
  \setlength{\tabcolsep}{3pt}
  \begin{tabular}{lcccc}
    \toprule
    Dataset & $\varepsilon$ & Std.\ cov. & Weighted cov. & ta cov. \\
    \midrule
    HIKARI-2021  & 0.15 & 0.8489 & $0.8956\pm0.0057$ & 0.9456 \\
                 & 0.30 & 0.8367 & $0.8967\pm0.0094$ & 0.9467 \\
    \midrule
    CIC-IDS-2018 & 0.15 & 0.3900 & $0.3856\pm0.0244$ & 0.9500 \\
                 & 0.30 & 0.3611 & $0.3533\pm0.0260$ & 0.9678 \\
    \midrule
    RT-IoT2022   & 0.15 & 0.0900 & $0.5056\pm0.0706$ & 0.9667 \\
                 & 0.30 & 0.0800 & $0.5089\pm0.2496$ & 0.9422 \\
    \bottomrule
  \end{tabular}
\end{table}

Table~\ref{tab:weightedcp} shows that weighted CP does not close the coverage
gap that ta-CP closes, and its behavior differs sharply by dataset. On
HIKARI-2021, weighted CP improves over standard CP ($0.849$--$0.837$ to
$0.896$--$0.897$) but still falls short of both the $1-\alpha=0.95$ target
and ta-CP. On CIC-IDS-2018, weighted CP provides \emph{no improvement at
all} over standard CP ($0.386$--$0.353$ versus $0.390$--$0.361$): the two are
statistically indistinguishable given the observed seed variation. On
RT-IoT2022, weighted CP recovers substantially over standard CP's near-total
collapse ($0.09$--$0.08$ to $0.51$--$0.51$) but remains far below nominal and,
critically, becomes highly \emph{unstable}: at $\varepsilon=0.30$ the three
seeds give coverage $0.16$, $0.73$, and $0.64$ (std $=0.250$, an order of
magnitude larger than any other coverage estimate reported in this paper). We
attribute this pattern to a known failure mode of weighted CP under severe
covariate shift: when clean-calibration and DC-perturbed-test covariates
become close to linearly separable in the DC-feature subspace (a positivity,
or overlap, violation), the estimated density ratio concentrates almost all
calibration weight mass on a small, seed-dependent subset of calibration
points, so the effective calibration sample size collapses and the resulting
weighted quantile is both biased and high-variance. CIC-IDS-2018 has the
highest DC-feature dimensionality of the three benchmarks and shows the
smallest improvement over standard CP, consistent with this explanation.
Ta-CP does not face this failure mode because it does not reweight existing
calibration data by an estimated, potentially near-degenerate density ratio;
it generates new calibration data directly from the perturbation law itself,
so calibration and (matched) test scores are exactly exchangeable by
construction rather than approximately so via an estimated correction.

\subsection{Cross-Architecture Replication and Surrogate-Transfer Attacks}
\label{sec:exp:crossarch}

A matched-protocol study restricted to one architecture cannot distinguish
the proposed mechanism from an architecture-specific artifact. Qwen3-8B and
Mistral-7B checkpoints, each trained under the identical corrected protocol
and evaluated with the same sampling seeds, epsilons, and calibration and
test sizes, reproduce the same qualitative boundary observed for LLaMA3-8B:
standard CP collapses on all three benchmarks, ta-CP restores marginal
coverage to approximately nominal, HIKARI-2021 recovers coverage with
near-singleton sets, and RT-IoT2022 and CIC-IDS-2018 recover coverage only
through broad sets. The near-identical class-level undercoverage pattern
across three independently fine-tuned architectures from three different
model families indicates the boundary is a property of the perturbed-traffic/label
relationship in these benchmarks, not an artifact of one checkpoint. Full
per-architecture tables are given in the supplemental material.

We additionally test whether the HIKARI-2021 result persists when candidate
perturbations are selected by a separate classifier rather than sampled
once, using an XGBoost surrogate trained only on the training partition to
select, from independently projected candidates, the one that most reduces
its own clean-prediction probability. A weak instance of this attack
(worst-of-1, changing about 5\% of clean-correct predictions on HIKARI-2021)
and a substantially stronger instance (worst-of-20, changing 66--72\% of
clean-correct predictions on CIC-IDS-2018, the benchmark where the
coverage-efficiency boundary is most severe) both leave ta-CP coverage close
to nominal in both architectures, despite an order-of-magnitude difference
in attack strength; by Remark~\ref{rem:empirical-delta}, both therefore
certify only a near-zero, uninformative lower bound on $\delta$. This
motivates evaluating a stronger, direct attack next. Full results are given
in the supplemental material.

\subsection{A Direct, Target-Informed Attack}
\label{sec:exp:direct-attack}

Every attack above is surrogate-transfer: candidate selection never queries
the target LLM's own scores, only an XGBoost proxy's clean-prediction
probability. This leaves open whether an attacker with direct query access to
the target model's output scores---but not its parameters, gradients, or the
calibrated threshold $\hat q_\mathrm{ta}$---could degrade ta-CP coverage
further. We term this a \emph{score-based target-query} attack, distinct
from a gradient-based white-box attack or a threshold-aware attack (neither
of which we evaluate). At each of $k=20$ rounds, a candidate DC perturbation
is scored by the target LLM itself, and the candidate that maximizes the
target's own true-label NLL is kept, using the true label but no surrogate
model at any stage---a standard, realistic assumption for characterizing
worst-case degradation, though stronger than a real field attacker's access.
Calibration uses the same single random i.i.d.\ DC draw as the
matched-perturbation experiment (Table~\ref{tab:tacp}), so only the
test-time attack differs. We run this on CIC-IDS-2018 (the dataset where the
coverage-efficiency boundary is most severe) for both LLaMA3-8B and
Qwen3-8B, across the same three sampling seeds, epsilons, and provenance
gates used throughout.

\begin{table}[t]
  \centering
  \caption{CIC-IDS-2018 direct, target-informed worst-of-20 attack: the
           target LLM's own true-label NLL is queried at every round, with no
           surrogate model ($\alpha=0.05$, mean $\pm$ std across three sampling
           seeds). Attack success is the fraction of clean-correct target
           predictions made incorrect.}
  \label{tab:direct-attack}
  \scriptsize
  \setlength{\tabcolsep}{2pt}
  \begin{tabular}{lccccc}
    \toprule
    Model & $\varepsilon$ & Attack success & Std.\ cov. & ta cov. & ta size \\
    \midrule
    LLaMA3-8B & 0.15 & $0.826\pm0.016$ & $0.224\pm0.011$ & $0.876\pm0.011$ & $3.980\pm0.237$ \\
              & 0.30 & $0.820\pm0.025$ & $0.212\pm0.016$ & $0.874\pm0.011$ & $4.168\pm0.141$ \\
    Qwen3-8B  & 0.15 & $0.826\pm0.013$ & $0.286\pm0.010$ & $0.788\pm0.032$ & $3.142\pm0.167$ \\
              & 0.30 & $0.800\pm0.019$ & $0.297\pm0.021$ & $0.760\pm0.009$ & $3.263\pm0.015$ \\
    \bottomrule
  \end{tabular}
\end{table}

The direct attack is both stronger and more damaging than either
surrogate-transfer instance. Attack success rises to $78$--$83\%$ of
clean-correct predictions (versus $66$--$72\%$ for worst-of-20
surrogate-transfer on the same dataset), and, critically, ta-CP coverage no
longer stays close to nominal: it falls to $0.874$--$0.876$ for LLaMA3-8B and
$0.760$--$0.788$ for Qwen3-8B, a drop of $6$--$19$ percentage points below the
$1-\alpha=0.95$ target---the only attack evaluated here under which ta-CP
coverage degrades by more than a few percentage points. This shows direct
query access to the target's own score is meaningfully stronger than
transferring from a surrogate, even at matched candidate budget, and that
the two architectures are not equally robust to it: Qwen3-8B's coverage drop
($16$--$19$ points) is roughly double LLaMA3-8B's ($6$--$8$ points) at
matched $\varepsilon$, despite both showing the same qualitative
coverage-efficiency boundary under matched random perturbations. By
Remark~\ref{rem:empirical-delta}, this certifies the only nontrivial lower
bound obtained in this paper: $\delta \geq 0.075$--$0.076$ for LLaMA3-8B and
$\delta \geq 0.16$--$0.19$ for Qwen3-8B.

\subsection{Attack-Orbit-Invariant Representations: An Exact Empirical Guarantee}
\label{sec:exp:invariance}

Section~\ref{sec:method:invariance} shows that scoring on an orbit-invariant
representation $\phi$ eliminates the coverage gap exactly
(Theorem~\ref{thm:invariant-coverage}), rather than merely bounding it
(Theorem~\ref{thm:robust-coverage}), and that a representation which removes
only literal DC coordinates without descendant detection is not guaranteed
to be orbit-invariant (Remark~\ref{rem:invariant-scope}). We evaluate both
claims on RT-IoT2022 under the strongest attack in this paper, the direct,
target-informed, adaptive target-score-query worst-of-20 attack of
Section~\ref{sec:exp:direct-attack} (true label used, target score queried,
no parameters or gradients), applied here to each representation's own
prompt in turn rather than only the standard one.

\textbf{Representations compared.} We fine-tune and evaluate three prompt
variants on RT-IoT2022, holding the partition, sampling seeds, LoRA
configuration, and training schedule identical to the main experiments
(Section~\ref{sec:exp:setup}): \textbf{Full}, the standard eight-line,
twenty-field verbalization used throughout the paper (identical to the
checkpoints behind Table~\ref{tab:tacp}); \textbf{DC-naive}, the same
template with every literal DC field replaced by a constant but every
other field---including aggregate rate and flag fields that are
deterministic descendants of DC fields under $\mathcal{P}_{\mathcal
D}$---left untouched; and \textbf{UC-only} ($\phi_\mathrm{UC}$), the
residual field set returned by Algorithm~\ref{alg:descendant} after probing
$\mathcal{P}_{\mathcal D}$ with $8$ independent random DC perturbations per
probe row (flow duration, backward packet count, backward payload total,
backward PSH-flag count, and the initial backward TCP window size), which
excludes all forward-direction fields and every field found to change under
probing.
Each variant is fine-tuned independently (LLaMA3-8B for all three variants;
Qwen3-8B and Mistral-7B for Full and UC-only) with $1{,}000$ examples per
class ($5$ epochs), under the same group-disjoint sampling used for the main
checkpoints. Evaluation uses $n_\mathrm{cal}=n_\mathrm{test}=300$,
$\alpha=0.05$, three sampling seeds, and the $K=20$ direct worst-of-$k$
attack of Section~\ref{sec:exp:direct-attack}, applied here to each
representation's own prompt (so, e.g., the UC-only attack searches for a DC
perturbation that changes the UC-only prompt string, not the standard one).
For the coarser DC-naive and UC-only prompts, calibration and test rows are
drawn only from held-out-pool prompt groups that never occur anywhere in the
training pool under that representation's own prompt text, which is a
strictly stronger disjointness guarantee than the row-level check used
elsewhere in the paper and is necessary because a coarser representation can
map multiple distinct raw rows to identical prompt strings.

\begin{table}[t]
  \centering
  \caption{Attack-orbit invariant representations on RT-IoT2022 under the
           $K=20$ direct worst-of-20 attack ($\varepsilon=0.15$, $\alpha=0.05$,
           mean $\pm$ std across three sampling seeds, $n_\mathrm{cal}=
           n_\mathrm{test}=300$). ``Identical'' reports whether every one of
           the $20$ attack rounds produced a prompt string identical to the
           clean prompt, verified exactly rather than estimated.}
  \label{tab:invariant-representations}
  \scriptsize
  \setlength{\tabcolsep}{2pt}
  \begin{tabular}{llcccccc}
    \toprule
    Repr.\ & Model & Clean top-1 & Clean cov.\ & Atk.\ top-1 & Atk.\ cov.\ & Atk.\ size & Identical \\
    \midrule
    Full & LLaMA3-8B & $0.994\pm0.004$ & $0.943\pm0.035$ & $0.120\pm0.017$ & $0.006\pm0.004$ & $0.840\pm0.024$ & No \\
    Full & Qwen3-8B  & $0.994\pm0.002$ & $0.964\pm0.015$ & $0.062\pm0.012$ & $0.001\pm0.002$ & $0.796\pm0.026$ & No \\
    Full & Mistral-7B& $0.996\pm0.003$ & $0.954\pm0.023$ & $0.053\pm0.008$ & $0.000\pm0.000$ & $0.463\pm0.139$ & No \\
    \midrule
    DC-naive & LLaMA3-8B & $0.942\pm0.006$ & $0.943\pm0.021$ & $0.932\pm0.006$ & $0.938\pm0.022$ & $1.040\pm0.059$ & No \\
    \midrule
    UC-only & LLaMA3-8B & $0.857\pm0.035$ & $0.947\pm0.019$ & $0.857\pm0.035$ & $0.947\pm0.019$ & $1.436\pm0.059$ & \textbf{Yes} \\
    UC-only & Qwen3-8B  & $0.913\pm0.021$ & $0.952\pm0.006$ & $0.913\pm0.021$ & $0.952\pm0.006$ & $1.136\pm0.055$ & \textbf{Yes} \\
    UC-only & Mistral-7B& $0.922\pm0.015$ & $0.952\pm0.016$ & $0.922\pm0.015$ & $0.952\pm0.016$ & $1.099\pm0.038$ & \textbf{Yes} \\
    \bottomrule
  \end{tabular}
\end{table}

\textbf{The Full representation collapses.} Consistent with
Table~\ref{tab:direct-attack} and Theorem~\ref{thm:robust-coverage}, scoring
on the unrestricted feature set gives near-perfect clean accuracy ($0.994$--
$0.996$) but collapses under attack across all three architectures: attacked
top-1 accuracy falls to $0.05$--$0.12$ and attacked coverage falls to
$0.000$--$0.006$, a $94$--$96$ percentage-point coverage loss from a clean
level already near nominal. This holds for every architecture tested, so the
collapse is a property of scoring on the unrestricted representation under
this attack, not of one checkpoint.

\textbf{The UC-only representation is exactly invariant, empirically, not
just in expectation.} Across all $3$ architectures $\times$ $3$ seeds $= 9$
independent runs and $20$ attack rounds each ($180$ attack attempts total),
every UC-only attacked prompt was byte-identical to its clean counterpart:
the ``Identical'' column in Table~\ref{tab:invariant-representations} is Yes
in every case, and attacked top-1 accuracy, coverage, and average set size
match their clean counterparts to machine precision in all $9$ runs. This is
the empirical realization of the pathwise identity in
Eq.~\eqref{eq:pathwise-identity}: the $K=20$ direct attack cannot move the
UC-only representation off its clean value by construction, so there is
nothing for the worst-case search to find. The utility cost of this exact
guarantee is visible in the Clean top-1 column: UC-only clean accuracy
($0.857$--$0.922$) is $7$--$14$ percentage points below Full's clean accuracy
($0.994$--$0.996$), and clean prediction sets are correspondingly larger
($1.10$--$1.44$ vs.\ $0.94$--$0.96$ under Full, before Full's coverage
collapses under attack). Theorem~\ref{thm:invariant-coverage}'s guarantee is
therefore not free: it trades some of the base model's ranking information
for an unconditional guarantee, precisely the trade the taxonomy in
Section~\ref{sec:threatmodel:taxonomy} makes explicit.

\textbf{The DC-naive representation is not exactly invariant, and the gap it
leaves is small but consistently nonzero.} Removing only the literal DC
fields, without descendant detection, does \emph{not} reproduce the Full
collapse: attacked top-1 accuracy ($0.932\pm0.006$) and attacked coverage
($0.938\pm0.022$) remain close to their clean values
($0.942\pm0.006$ and $0.943\pm0.021$). However, the ``Identical'' column is
No in every seed: the attacked prompt does change under the $K=20$ search
(through the retained descendant fields), and Table~\ref{tab:invariant-representations}
shows this produces a small, consistently nonzero coverage loss of $0.5$--$1.0$
percentage points, not the $94$--$96$-point collapse of the Full
representation. This is direct evidence for Remark~\ref{rem:invariant-scope}:
$\phi$-invariance is sufficient but not automatic from removing DC fields
alone, and the residual descendant leakage that Algorithm~\ref{alg:descendant}
is designed to catch is exactly what separates the DC-naive row from the
UC-only rows in Table~\ref{tab:invariant-representations}.

A point-level verification of Eq.~\eqref{eq:gap-identity} for the DC-naive
representation---confirming that the observed coverage gap equals exactly
the number of test points whose true-label score crosses $\hat q$ under
attack, divided by $n_\mathrm{test}$, in every seed---is given in the
supplemental material.

\begin{remark}[Scope of Section~\ref{sec:exp:invariance}]
  \label{rem:invariance-scope}
  This evaluation is restricted to RT-IoT2022 and a single architecture for
  the DC-naive comparison, and Algorithm~\ref{alg:descendant}'s guarantee is
  verified against the specific probe budget and attack evaluated here, not
  certified against every conceivable one; Section~\ref{sec:discussion:limits}
  details these boundaries.
\end{remark}

\section{Discussion}
\label{sec:discussion}

\subsection{Deployment Considerations}

Coverage is not decision recovery. Matching the calibration distribution to
the test perturbation law repairs the threshold mismatch and marginal
coverage; the usefulness of the resulting set still depends on the base
model's candidate ranking, which calibration alone cannot restore.
HIKARI-2021 preserves the true label near the top of the ranking and yields
near-singleton sets, while RT-IoT2022 and CIC-IDS-2018 often rank it lower,
so ta-CP can only cover the label by widening the set---a boundary on what
calibration can recover, not a failure of the conformal guarantee.

Whether a high escalation rate signals an attack or merely reflects an
inflated perturbed-calibration threshold depends on the (dataset,
architecture) pair (supplemental material): a deployer should verify
clean-traffic escalation before treating a high rate as attack evidence.
Ta-CP and standard CP share the same test-time scoring workload, and
calibration is an offline procedure, making the mechanism compatible with
batch-mode traffic analysis; recalibration frequency must be set from
observed drift rather than a fixed interval, since the guarantee requires
exchangeability after the specified perturbation mechanism. Three backbones
in the 7--8B range support a cross-dataset mechanism claim that is not
specific to one checkpoint, though Section~\ref{sec:exp:direct-attack} shows
architecture choice still affects empirical robustness to a strong attack.

\subsection{Limitations}
\label{sec:discussion:limits}

Theorem~\ref{thm:ta-coverage} and Theorem~\ref{thm:robust-coverage} concern
an adversary captured by an i.i.d.\ or distributionally bounded perturbation
process; the surrogate-transfer and direct attacks evaluated here optimize
candidates adaptively and are therefore outside Theorem~\ref{thm:ta-coverage}'s
scope, though within Theorem~\ref{thm:robust-coverage}'s as evidence about
$\delta$. A threshold-aware adversary that additionally targets $\hat
q_\mathrm{ta}$ directly could plausibly induce a larger $\delta$ than any
attack evaluated here, and is not tested. The numerical perturbation budget
bounds the DC proposal before semantic projection, so dependent derived
coordinates can move by more than $\varepsilon$ once relational consistency
is enforced; the experiments therefore characterize the explicit
proposal-plus-projection law, not a final-vector norm ball. CIC-IDS-2018
yields no evaluated Probe-class flows, and NLL-based scores remain sensitive
to candidate token length despite a fixed label vocabulary.

The attack-orbit-invariant evaluation (Section~\ref{sec:exp:invariance}) is
restricted to RT-IoT2022; the DC-naive comparison uses a single
architecture, and whether its small, nonzero gap generalizes to the other
two is not established. Algorithm~\ref{alg:descendant} is empirical and
probe-based, so a dependency realized only outside the probed perturbation
magnitudes would not be caught; Table~\ref{tab:invariant-representations}'s
guarantee is verified against the specific attack evaluated, not certified
against every conceivable probe budget. Most importantly,
Theorem~\ref{thm:invariant-coverage}'s exact guarantee is a property of the
chosen representation, not of the deployed model as a whole: it does not
transfer to a deployment that scores the full representation for accuracy
and only \emph{post hoc} restricts to the invariant one for some inputs.
The supplemental material details these boundaries further.

\section{Conclusion}
\label{sec:conclusion}

Traffic-aware conformal prediction addresses a specific failure of
conformal intrusion detection: a threshold estimated on clean traffic does
not cover nonconformity scores once the traffic distribution shifts under
controllable-feature perturbation. When calibration and test perturbations
are i.i.d.\ draws from the same distribution, ta-CP inherits split CP's
finite-sample marginal coverage guarantee; a distributionally robust
extension further bounds the coverage loss by the total variation distance
between calibration- and test-time score distributions for any test-time
process, realized empirically in both directions: surrogate-transfer
attacks leave coverage close to nominal and certify only an uninformative
near-zero bound, while a direct attack querying the target model's own
score is markedly stronger and yields the only nontrivial certified bound
in this paper. Weighted conformal prediction, evaluated under the identical
scenario, does not close this gap and becomes unstable under severe
covariate shift, so the mechanism is not a relabeling of an existing remedy.

This distributionally robust bound still must be certified after the fact
from an observed coverage drop. We show it can be eliminated rather than
merely tightened for representations invariant to the adversary's action:
scoring on such a representation yields an exact, pathwise coverage identity
against any adversary within the threat model, and a gap identity unifies
this result with the distributionally robust bound and with existing
Lipschitz- and smoothing-based robust conformal prediction methods as the
two endpoints of one functional---bound the adversary's reach, or eliminate
it by construction. An automated, probe-based construction of such a
representation on RT-IoT2022 leaves every evaluated attack attempt across
three architectures exactly unchanged, while a naively restricted
representation that omits descendant detection leaves a small but
consistently nonzero gap---evidence that descendant detection, not merely
removal of attacker-controlled fields, is what the exact guarantee
requires. This guarantee costs seven to fourteen points of clean accuracy
relative to the unrestricted feature set, a trade a deployer must weigh
explicitly, and its scope is currently limited to one benchmark and to the
coordinate set the descendant-detection procedure returns under its
evaluated trial budget.

\section*{Data Availability Statement}
CIC-IDS-2018, HIKARI-2021, and RT-IoT2022 are publicly available benchmark
datasets. Feature annotations, certification code, and per-sample
predictions will be released upon acceptance.

\bibliographystyle{IEEEtran}
\bibliography{refs}

\end{document}